\newtheorem{definition}{Definition}
\newtheorem{proposition}[definition]{Proposition}
\newtheorem{lemma}[definition]{Lemma}
\newtheorem{theorem}[definition]{Theorem}
\newtheorem{corollary}[definition]{Corollary}
\newtheorem{conjecture}[definition]{Conjecture}
\newtheorem{remark}[definition]{Remark}
\newtheorem{example}[definition]{Example}
\newtheorem{question}[definition]{Question}
\DeclareMathOperator{\Tr}{Tr}
\def\squareforqed{\hbox{\rlap{$\sqcap$}$\sqcup$}}
\def\qed{\ifmmode\squareforqed\else{\unskip\nobreak\hfil
\penalty50\hskip1em\null\nobreak\hfil\squareforqed
\parfillskip=0pt\finalhyphendemerits=0\endgraf}\fi}
\def\endenv{\ifmmode\;\else{\unskip\nobreak\hfil
\penalty50\hskip1em\null\nobreak\hfil\;
\parfillskip=0pt\finalhyphendemerits=0\endgraf}\fi}
\newenvironment{proof}{\noindent \textbf{{Proof.~} }}{\qed}
\def\Dbar{\leavevmode\lower.6ex\hbox to 0pt
{\hskip-.23ex\accent"16\hss}D}
\def\url@leostyle{%
  \@ifundefined{selectfont}{\def\UrlFont{\sf}}{\def\UrlFont{\small\ttfamily}}}
\def\bcj{\begin{conjecture}}
\def\ecj{\end{conjecture}}
\def\bcr{\begin{corollary}}
\def\ecr{\end{corollary}}
\def\bd{\begin{definition}}
\def\ed{\end{definition}}
\def\bea{\begin{eqnarray}}
\def\eea{\end{eqnarray}}
\def\bem{\begin{enumerate}}
\def\eem{\end{enumerate}}
\def\bex{\begin{example}}
\def\eex{\end{example}}
\def\bim{\begin{itemize}}
\def\eim{\end{itemize}}
\def\bl{\begin{lemma}}
\def\el{\end{lemma}}
\def\bpf{\begin{proof}}
\def\epf{\end{proof}}
\def\bpp{\begin{proposition}}
\def\epp{\end{proposition}}
\def\bqu{\begin{question}}
\def\equ{\end{question}}
\def\br{\begin{remark}}
\def\er{\end{remark}}
\def\bt{\begin{theorem}}
\def\et{\end{theorem}}
\def\btb{\begin{tabular}}
\def\etb{\end{tabular}}
\newcommand{\nc}{\newcommand}
\def\e{\epsilon}
 \nc{\bA}{{\bf A}} \nc{\bB}{{\bf B}} \nc{\bC}{{\bf C}}
 \nc{\bD}{{\bf D}} \nc{\bE}{{\bf E}} \nc{\bF}{{\bf F}}
 \nc{\bG}{{\bf G}} \nc{\bH}{{\bf H}} \nc{\bI}{{\bf I}}
 \nc{\bJ}{{\bf J}} \nc{\bK}{{\bf K}} \nc{\bL}{{\bf L}}
 \nc{\bM}{{\bf M}} \nc{\bN}{{\bf N}} \nc{\bO}{{\bf O}}
 \nc{\bP}{{\bf P}} \nc{\bQ}{{\bf Q}} \nc{\bR}{{\bf R}}
 \nc{\bS}{{\bf S}} \nc{\bT}{{\bf T}} \nc{\bU}{{\bf U}}
 \nc{\bV}{{\bf V}} \nc{\bW}{{\bf W}} \nc{\bX}{{\bf X}}
 \nc{\bZ}{{\bf Z}}
\nc{\cA}{{\cal A}} \nc{\cB}{{\cal B}} \nc{\cC}{{\cal C}}
\nc{\cD}{{\cal D}} \nc{\cE}{{\cal E}} \nc{\cF}{{\cal F}}
\nc{\cG}{{\cal G}} \nc{\cH}{{\cal H}} \nc{\cI}{{\cal I}}
\nc{\cJ}{{\cal J}} \nc{\cK}{{\cal K}} \nc{\cL}{{\cal L}}
\nc{\cM}{{\cal M}} \nc{\cN}{{\cal N}} \nc{\cO}{{\cal O}}
\nc{\cP}{{\cal P}} \nc{\cQ}{{\cal Q}} \nc{\cR}{{\cal R}}
\nc{\cS}{{\cal S}} \nc{\cT}{{\cal T}} \nc{\cU}{{\cal U}}
\nc{\cV}{{\cal V}} \nc{\cW}{{\cal W}} \nc{\cX}{{\cal X}}
\nc{\cZ}{{\cal Z}}
\nc{\hA}{{\hat{A}}} \nc{\hB}{{\hat{B}}} \nc{\hC}{{\hat{C}}}
\nc{\hD}{{\hat{D}}} \nc{\hE}{{\hat{E}}} \nc{\hF}{{\hat{F}}}
\nc{\hG}{{\hat{G}}} \nc{\hH}{{\hat{H}}} \nc{\hI}{{\hat{I}}}
\nc{\hJ}{{\hat{J}}} \nc{\hK}{{\hat{K}}} \nc{\hL}{{\hat{L}}}
\nc{\hM}{{\hat{M}}} \nc{\hN}{{\hat{N}}} \nc{\hO}{{\hat{O}}}
\nc{\hP}{{\hat{P}}} \nc{\hR}{{\hat{R}}} \nc{\hS}{{\hat{S}}}
\nc{\hT}{{\hat{T}}} \nc{\hU}{{\hat{U}}} \nc{\hV}{{\hat{V}}}
\nc{\hW}{{\hat{W}}} \nc{\hX}{{\hat{X}}} \nc{\hZ}{{\hat{Z}}}
\newcommand{\bra}[1]{\langle#1|}
\newcommand{\ket}[1]{|#1\rangle}
\def\Dbar{\leavevmode\lower.6ex\hbox to 0pt
{\hskip-.23ex\accent"16\hss}D}
\begin{document}

\title{Local density matrices of many-body states\\ in the constant weight subspaces}

\author{Jianxin Chen}
\affiliation{Joint Center for Quantum Information and Computer Science, University of Maryland, College Park, Maryland, USA}
\author{Muxin Han}%
\affiliation{Department of Physics, Florida Atlantic University, FL 33431, USA}%
\affiliation{Institut f\"ur Quantengravitation, Universit\"at Erlangen-N\"urnberg, Staudtstr. 7/B2, 91058 Erlangen, Germany}
\author{Youning Li}\thanks{Author to whom correspondence should be addressed: liyn03@gmail.com}%
\affiliation{Department of Physics, Tsinghua University, Beijing, People's Republic of China}%
\affiliation{Collaborative Innovation Center of Quantum Matter, Beijing 100190, People's Republic of China}%
\author{Bei Zeng}
\affiliation{Department of Mathematics \& Statistics, University of Guelph, Guelph, Ontario, Canada}%
\affiliation{Institute for Quantum Computing, University of Waterloo, Waterloo, Ontario, Canada}
\author{Jie Zhou}
\affiliation{Perimeter Institute for Theoretical Physics, Waterloo, Ontario, Canada}

\date{\today}

\begin{abstract}
Let $V=\bigotimes_{k=1}^{N} V_{k}$ be an $N$-particle Hilbert space, whose individual single-particle space is the one with spin $j$ and dimension $d=2j+1$.  Let $V_{(w)}$ be the subspace of $V$ with constant weight $w$,
consisting of vectors whose total spins are $w$.
We show that the combinatorial properties of the constant weight condition impose strong constraints on the reduced density matrices for any vector $\ket{\psi}$ in the constant weight subspace $V_{(w)}$, which limit the possibility of the entanglement structures of $\ket{\psi}$. Our results find applications in the overlapping quantum marginal problem, quantum error-correcting codes, and the spin-network structures in quantum gravity.
\end{abstract}

\maketitle
\renewcommand\theequation{\arabic{section}.\arabic{equation}}
\setcounter{tocdepth}{4}
\makeatletter
\@addtoreset{equation}{section}
\makeatother

\section{Introduction}
\label{sec:Intro}

Consider a system of $N$ particles, each with spin $j$ and dimension $d=2j+1$. The Hilbert space of the system is then $V=\bigotimes_{k=1}^{N} V_{k}$, where all of the $V_k$'s are identical.
For any vector ${\psi}\in{V}$, its entanglement structure can be analyzed by
the reduced density matrices (RDMs) of $\psi\otimes\psi^*\in V\otimes V^*$~\cite{toth2007detection,horodecki2009quantum}. There are, however, always restrictions on these RDMs, given by, e.g. the principle of entanglement monogamy~\cite{toth2007detection,horodecki2009quantum}.
In a more general setting, the quantum marginal problem considers the consistency of a set of given local density matrices. This problem turns out to be a hard problem even with the existence of quantum computers.
On the other hand, generic states are always highly entangled, in the large-$j$ limit~\cite{hayden2006aspects}.

Besides those general analyses, there are also physical considerations that may restrict the form of ${\psi}$, hence the entanglement structures of ${\psi}$. For instance, ground states of local Hamiltonians would satisfy the entanglement area law, hence may be well-approximated by the tensor network representation~\cite{zeng2015quantum,eisert2010colloquium,verstraete2006criticality}. States with special symmetry are also discussed such as the Dicke states and their generalizations~\cite{toth2007detection,wieczorek2009experimental}. Stabilizer/graph states are considered in the scenario of quantum error correction and one-way quantum computing~\cite{calderbank1998quantum,gottesman1997stabilizer,raussendorf2001one,hein2004multiparty}. Very recently, states that may be represented by (restricted) Boltzman machine are considered to apply machine learning techniques to study many-body ground states~\cite{carleo2017solving}.

In this work, we consider the restriction to constant weight subspaces.
A state $\psi$ in $V$ possesses a constant weight $w$, if it lies in the subspace which has
an orthonormal basis $\{\ket{m_1,\ldots,m_N}\}$ satisfying $\sum_i m_i=w$. In the spin language, the state has a fixed $J_z$-component of the total spin. These subspaces arise naturally as decoherence free subspaces under collective dephasing of the system~\cite{lidar1998decoherence} -- that is, since each qubit gets a phase factor that only depends on its own weight, any constant weight state obtains a global phase for the collective dephasing. In this sense, a constant weight state is invariant under the collective dephasing noise, hence is decoherence free. Also, constant weight states are discussed in many other contexts, such as the atomic Dicke states and its generalizations. When $w=0$, this subspace contains the invariant subspace of zero angular momentum, which is widely discussed in loop quantum gravity.
Despite that the constant-weight condition arise naturally in these many places, the entanglement structure of these spaces has not been studied systematically.

We discuss the properties of the RDMs of constant weight states in a very general setting.
We show that there exist linear conditions between the elements of RDMs, for any $j$, $N$ and $w$, which can be written down explicitly. Our key idea is that the combinatorial properties given by the constant weight constraint, which is mathematically a partition of $w$, lead to such linear structures of the reduced density matrices. These conditions could find many applications. For instance, it implies that there is no perfect tensor in a constant weight subspace, for any $j$ when $N\geq 4$, which is a concept that has recently received attention from understanding quantum gravity from the quantum information viewpoint~\cite{pastawski2015holographic,li2017invariant}. Also, given the intimate connections between perfect tensors and quantum error-correcting codes, our results give restrictions on the achievable distance on constant-weight quantum codes. In practice, our conditions can also be used as good certificates for decoherence-free subspaces.

We organize our paper as follows. In Sec.~\ref{sec:pre}, we define our notions and provide background information on constant weight subspaces in the $\mathrm{SU}(2)$ case. In Sec.~\ref{sec:combinatorics} we discuss the combinatorial structure of the constant weight condition that leads to our main theorem on linear relations of RDMs.
In Sec.~\ref{sec:stru}, we discuss further the relationships between these linear structures.
In Sec.~\ref{sec:app}, we discuss the application of our main results to
the quantum marginal problem
and the
nonexistence of perfect tensors. In Sec.~\ref{sec:sun}, we discuss the generalization to the $\mathrm{SU}(n)$ case. Finally, in Sec.~\ref{sec:gen}, we discuss the generalization for relaxing the constant weight condition by introducing the notion of frequency matrix.

\section{Preliminaries}
\label{sec:pre}

According to the standard representation theory \cite{fulton2013representation}, the study of representations of the Lie group $\mathrm{SU}(2)$
is essentially equivalent to that of the Lie algebra $\mathfrak{su}(2)$.
In this work, we shall use the language of the latter for simplicity.\\

The Lie algebra $\mathfrak{su}(2)$ is generated by the Chevalley-Serre basis $\{H,X,Y\}$,
whose Lie algebra structure is given by
\begin{equation}
[H,X]=2X\,,
\quad
[H,Y]=-2Y\,,
\quad
[X,Y]=H\,.
\end{equation}
They act on the standard representation $\mathbb{C}^2$ of $\mathfrak{su}(2)$  by multiplication by the following matrices:
\begin{equation}
H=
\begin{pmatrix}
1 & 0\\
0&-1
\end{pmatrix}\,,
\quad
X=
\begin{pmatrix}
0& 1\\
0&0
\end{pmatrix}\,,
\quad
Y=
\begin{pmatrix}
0& 0\\
1&0
\end{pmatrix}\,.
\end{equation}
These matrices are related to the Pauli matrices
\begin{equation}
J_z=
\begin{pmatrix}
1 & 0\\
0&-1
\end{pmatrix}\,,
\quad
J_x=
\begin{pmatrix}
0& 1\\
1&0
\end{pmatrix}\,,
\quad
J_y=
\begin{pmatrix}
0& -i\\
i &0
\end{pmatrix}\,
\end{equation}
by
\begin{equation}
X= \frac{1}{2}(J_{x}+iJ_{y})=J_{+}\,,\quad
Y=\frac{1}{2} (J_{x}-i J_{y})=J_{-}\,,
\quad
H= J_{z}=J_{3}\,.
\end{equation}

The finite dimensional irreducible representations of $\mathfrak{su}(2)$ are classified by the dimension $d\in \mathbb{Z}_{+}$.
One has
for each $d\in \mathbb{Z}_{+}$ an irreducible representation
$\mathrm{Sym}^{\otimes 2j}\mathbb{C}^{2}$ of dimension $d=2j+1$,
where $\mathbb{C}^{2}$ is the standard representation.\\

Each representation $W$ of dimension $D$, not necessarily irreducible,
can be decomposed into a direct sum of irreducible representations.
Moreover, there exists a Hermitian metric $\langle-,- \rangle$ on $W$
such that the decomposition is orthogonal.
We shall denote the
dual of $W$ with respect to the Hermitian metric by $W^{*}$,
and the dual
of a vector $v\in W$
by $v^{*}$.

According to the weight decomposition of irreducible representations, one can
then find an orthonormal basis of $W$
\begin{equation}
\mathcal{B}=\{e_{1},e_{2},\cdots, e_{D}\}
\end{equation}
whose
weights, the eigenvalues under the action of $H$, are
\begin{equation}
\alpha_{1},\alpha_{2},\cdots, \alpha_{D}\,,
\end{equation}
respectively.
Note that here the weight is 2 times the usual notion of spin.
See Figure. \ref{figureirredecomposition} for an illustration.\\

\begin{figure}[h]
  \renewcommand{\arraystretch}{1.2}
\begin{displaymath}
\includegraphics[scale=0.8]{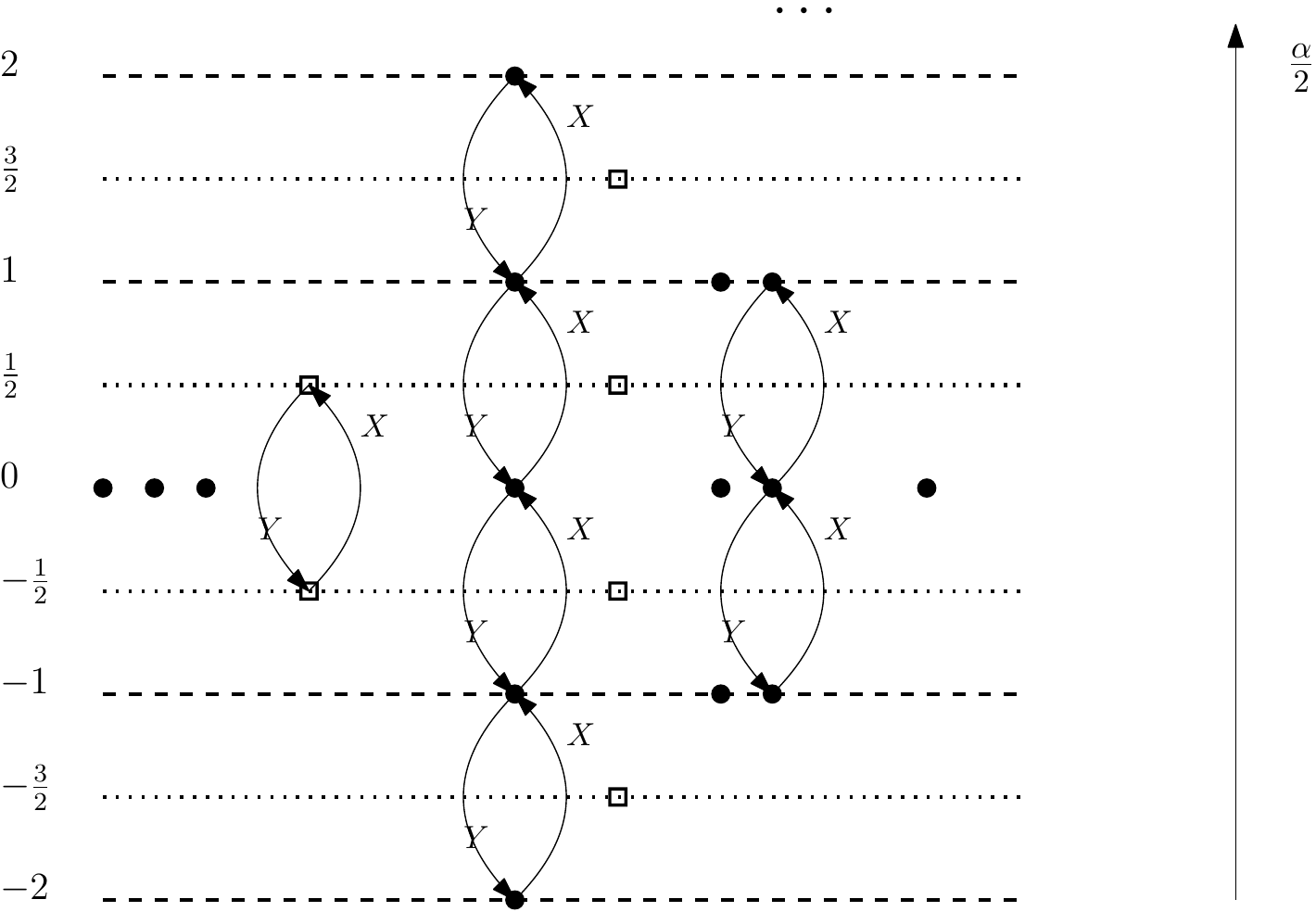}
\end{displaymath}
  \caption[irredecomposition]{The decomposition of a representation $W$ into direct sum of irreducible representations.
Each disk/square represents a one-dimensional eigenspace of $H$ with integer/half integer spins, respectively.
The vector spaces aligned in the vertical direction, which are connected by actions of $X$ and $Y$, constitute an irreducible representation.}
  \label{figureirredecomposition}
\end{figure}

We label each of the vectors in the basis $\mathcal{B}$ for $W$ by its sub-index $r\in \{1,2,\cdots, D\}$, and vice versa.
We shall adapt this convenient convention
throughout this work.\\


We consider in this work the tensor product
\begin{equation}
V=\bigotimes_{k=1}^{N} V_{k}\,,
\end{equation}
where all of the components $V_{k}$ are identical  to some given representation, not necessarily irreducible, say $W$. In our following discussion, we shall consider the non-trivial case $N\geq2$.\\

A basis of $V$ is then provided by $\mathcal{B}^{\otimes N}$, whose elements are indexed by the multi-indices
$I=(i_{1},i_{2},\cdots, i_{N})$ corresponding to the vector
\begin{equation}\label{eqnbasisforV}
e_{I}:=e_{i_{1}}\otimes e_{i_{2}}\cdots \otimes e_{i_{N}}\,.
\end{equation}
The weight of this vector is easily seen to be
\begin{equation}
\mathrm{weight}(I):=\alpha_{i_{1}}+\alpha_{i_2}+\cdots \alpha_{i_{N}}\,.
\end{equation}
Any vector $\psi$ in $V$ is then represented by
\begin{equation}
\psi=\sum_{e_I\in \mathcal{B}^{\otimes N}} a_{I}e_{I}\,,
\quad a_{I}\in \mathbb{C}\,.
\end{equation}

We now discuss the notion of partial trace. Choose a subset of components $\Lambda \subseteq \{1,2,\cdots N\}$ for $ V$,  and define
\begin{equation}
V_{\Lambda}=\bigotimes_{k\in \Lambda} V_{k}\,,
\quad
V_{\Lambda^{c}}=\bigotimes_{k\notin \Lambda} V_{k}\,.
\end{equation}
The identity operator $\mathbb{I}_{V_{\Lambda}}\in \mathrm{End}(V_{\Lambda}) $
is equivalently represented by a unique tensor $\Delta_{V_{\Lambda}}\in V_{\Lambda} \otimes \check{V}_{\Lambda}$, or alternatively a unique tensor $\check{\Delta}_{V_{\Lambda}}\in \check{V}_{\Lambda} \otimes V _{\Lambda}$, where the notation $\check{}$ means the linear dual in the category of vector spaces.
The explicit formula for $\Delta_{V_{\Lambda}}$ is displayed in \eqref{eqntensorformofidentity} below.

The Hermitian metric gives an identification between the Hermitian dual $V^{*}$ and the linear dual $\check{V}$.
This identification will be assumed frequently in this work.
By this identification, the element $\psi\otimes \psi^{*}\in V\otimes V^{*}$ then determines an element in $\mathrm{End} (V)=V \otimes \check{V}$.
Thus one can contract it with the tensor
$\Delta_{V_{\Lambda}}$. This is equivalent to the following pairing using the Hermitian metric
\begin{equation}\label{eqndefofpartialtrace}
\langle \Delta_{V_{\Lambda}} ,  \psi\otimes \psi^{*} \rangle\in \mathrm{End} (V_{\Lambda^{c}})\,.
\end{equation}
As priori, $\langle \Delta_{V_{\Lambda}} ,  \psi\otimes \psi^{*} \rangle$ is only an element in $V_{\Lambda^c} \otimes V_{\Lambda^c}^*$, but in \eqref{eqndefofpartialtrace}, we have used such identification to identify it with an element in $\mathrm{End} (V_{\Lambda^{c}})=V_{\Lambda^{c}} \otimes \check{V_{\Lambda^{c}}}$.

\begin{definition}[Partial trace]
The partial trace of $\psi\otimes \psi^{*}$ over the vector space $V_{\Lambda}$ is defined to be
\begin{equation}
\mathrm{Tr}_{V_{\Lambda}}( \psi\otimes \psi^{*} ):=\langle \Delta_{V_{\Lambda}} ,  \psi\otimes \psi^{*} \rangle\,.
\end{equation}
\end{definition}

The above definition can be applied to a general element in $V\otimes V^{*}$ which is not necessarily of the form $\psi\otimes \psi^{*}$.\\

Writing $I=(L; K)$, where
$K$ runs over the index set for
the orthonormal basis
$\mathcal{B}^{\otimes |\Lambda|}=\{e_{K}\}$ of $V_{\Lambda}$ and
$L$ over that for  $V_{\Lambda^{c}}$, we then have
\begin{equation}\label{eqntensorformofidentity}
 \Delta_{V_{\Lambda}} =\sum_{e_{K}\in \mathcal{B}^{\otimes |\Lambda|}} e_{K}\otimes e_{K}^{*}\,.
\end{equation}
The notation $|\Lambda|$ stands for the cardinality of the index set $\Lambda$.

Hence (hereafter $\Theta_D :=\{1,2,\cdots, D\}$)
\begin{equation}
\mathrm{Tr}_{V_{\Lambda}}( \psi\otimes \psi^{*} )=
\sum_{L, L'}
\sum_{K\in \Theta_D^{\otimes |\Lambda|} }a_{(L;K)} a^{*}_{(L';K)} e_{L}\otimes e_{L'}^{*}\in \mathrm{End} (V_{\Lambda^{c}})\,.
\end{equation}

\begin{example}
Consider the $D=2, N=4$ case.
Given a state $\ket{\psi}=\sum_{i_1,i_2,i_3,i_4} a_{i_1,i_2,i_3,i_4} \ket{i_1,i_2,i_3,i_4}$, taking partial trace over $V_1\otimes V_4$ can be calculated by
\begin{equation}
\Tr_{V_1\otimes V_4}\left(\ket{\psi}\bra{\psi}\right)
=\sum_{i_2,i_3,i_2',i_3'=1,2}\ket{i_2i_3}\bra{i_2'i_3'}
\sum_{i_1,i_4=1,2}a_{i_1i_2i_3i_4}a^{*}_{i_1i_2'i_3'i_4}\,.
\end{equation}
\end{example}

With respect to the basis we have chosen, $\mathrm{Tr}_{V_{\Lambda}}( \psi\otimes \psi^{*} )$
is naturally represented by its entries
 \begin{equation}\label{eqnentriesofpartialtrace}
\left(\mathrm{Tr}_{V_{\Lambda}}( \psi\otimes \psi^{*} )\right)_{L, L'}=
\sum_{K\in \Theta_D^{\otimes |\Lambda|} }a_{(L;K)} a^{*}_{(L';K)} \,,
\quad
L, L'\in \Theta_D^{\otimes|\Lambda^{c}|}\,.
\end{equation}
The tensor $\psi\otimes \psi^{*}\in\mathrm{End}(V_{\Lambda^{c}}\otimes V_{\Lambda})$ has rank one, hence
the dimension of its kernel is
$\mathrm{dim} (V_{\Lambda^{c}}\otimes V_{\Lambda}) -1$.
Taking the partial trace over $V_{\Lambda}$ would
at most increase the rank of the resulting partial trace by $ \mathrm{dim}  V_{\Lambda}$:
forgetting about the component $V_{\Lambda}$ in $V_{\Lambda^{c}}\otimes V_{\Lambda}$ would at most decrease the dimension of the kernel by $ \mathrm{dim}  V_{\Lambda}$.
Therefore, the rank of $\mathrm{Tr}_{V_{\Lambda}}( \psi\otimes \psi^{*} )$ has an upper bound
\begin{equation}
\mathrm{rank}(\mathrm{Tr}_{V_{\Lambda}}( \psi\otimes \psi^{*} ))\leq 1+ \mathrm{dim} V_{\Lambda}\,.
\end{equation}
In order that the partial trace, as an element in $V_{\Lambda^{c}}\otimes V_{\Lambda^{c}}^{*}$, has full rank, the following condition has
to be met
\begin{equation}
\mathrm{dim} V_{\Lambda^{c}}\leq  1+ \mathrm{dim} V_{\Lambda} \,.
\end{equation}
In the present case, all of the components are isomorphic representations.
Hence the above condition reduces to
\begin{equation}\label{eqnqualitativeanalysis}
|\Lambda|\geq [{N\over 2}]\,.
\end{equation}
Intuitively, one must contract enough components in order for the resulting partial trace
to have possibly maximal rank.

\section{Combinatorics in partial trace on the constant weight subspace}
\label{sec:main}

We shall discuss in this section some combinatorial
structure of partial trace and of the constant weight subspace,
based on which
we shall discuss some applications in Section~\ref{sec:app} .\\

Among the entries in the partial trace \eqref{eqnentriesofpartialtrace},
of particular interest are the diagonal ones
 \begin{equation}\label{eqndfnrhodiagonal}
\rho_{L}^{{\Lambda^{c}}}
:=
\sum_{K\in \Theta_D^{\otimes |\Lambda|}}|a_{(L;K)} |^2 \,,
\quad
L\in \Theta_D^{\otimes|\Lambda^{c}|}\,.
\end{equation}

We fix $\Lambda^{c}=\{1,2,\cdots, M, M+1\}$ for some $0\leq M\leq N-1$.
Writing the index set $L$ as $(I_{0}; i_{M+1})$, where $I_0=(i_1, i_2,\cdots,i_M)$ is a multi-index
and, $1 \leq i_k \leq D, k=1,2,\cdots M, M+1$. Then the diagonal pieces of the partial trace
over $V_{\Lambda}$ are represented by the entries
\begin{equation}\label{eqndiagonalpieces}
\rho_{(I_{0};i_{M+1} )}^{{\Lambda^{c}}}
=
\sum_{K\in \Theta_D^{\otimes |\Lambda|}} |a_{(I_0;i_{M+1};K)} |^2 \,,
\quad
\forall\, (I_{0};i_{M+1})\in \Theta_D^{\otimes|\Lambda^{c}|}\,.
\end{equation}
An element $K$ takes the form $(i_{M+2}, \cdots, i_{N})$.
By moving the position of $i_{M+1}$ from $M+1$ to a position valued in the set $\{M+1,M+2,\cdots, N\}$ which symbolically is denoted by $*$,
we get similarly the quantities.
To be more precise,
denote
\begin{equation*}
\Lambda^{c}_{*}=\{1,2,\cdots, M, *\}\,,
\quad
\Lambda_{*}= \{1,2,\cdots N\}-\Lambda^{c}_{*}\,,
\end{equation*}
then we have
\begin{equation}
\rho_{(I_{0}; i_*)}^{\Lambda^{c}_{*}=\{1,2,\cdots, M, *\}}=
\sum_{(i_{M+1},\cdots,\widehat{ i_{*}}, \cdots, i_{N})  \in \Theta_D^{\otimes |\Lambda_{*}|}} |a_{(I_0; i_{M+1},\cdots, i_{*}, \cdots, i_{N}) } |^2 \,,
\quad
\forall\, (I_{0};i_{*})\in \Theta_D^{\otimes|\Lambda_{*}^{c}|}\,.
\end{equation}
Here the notation $\widehat{ i_{*}}$
means the index $ i_{*}$ is omitted.
For simplicity, we denote
$K_{*}=(i_{M+1},\cdots,\widehat{ i_{*}}, \cdots, i_{N})$ and $K'=( i_{M+1},\cdots, i_{*}, \cdots, i_{N})$.
Since the sub-index $k$ of $i_{k}$ already contains the information $i_{k}\in V_{k}$, we can
denote for convenience $K'=(i_{*}; K_{*})$.

Later we shall consider
 \begin{equation}\label{eqnsummationoverrexpanded}
\sum_{*\in  \{M+1,M+2,\cdots, N\}  }
\sum_{\substack{i_*=1\\e_{i_*}\in V_{*} }}^{D}
\rho_{(I_{0};i_* )}^{\{1,2,\cdots , M, *\} }
=
\sum_{*\in  \{M+1,M+2,\cdots, N\}}
\sum_{\substack{i_*=1\\e_{i_*}\in V_{*} }}^{D}
\sum_{K_*  \in \Theta_D^{\otimes |\Lambda_{*}|}} |a_{(I_0; i_*; K_*)} |^2
\,,
\end{equation}
where $e_{i_*} \in V_*$
indicates that the index $i_*$ labels different basis vectors in $V_{*}$.

\subsection{Combinatorial identity in the constant weight subspace}
\label{sec:combinatorics}

After restricting to the constant weight subspace $V_{(w)}$ of $V$ consisting of vectors of weight $w$,
an index $K'$ that appear in the sum \eqref{eqnsummationoverrexpanded}
corresponds to a $(N-M)$-tuple $x=(x_{M+1}, \cdots ,x_{*}, \cdots, x_N)$, where $x_{k}$
is the weight of the vector labeled by $i_{k}$ for $k=M+1,\cdots, *, \cdots , N$.
The constant weight condition is translated to the condition that $x$ is a solution to the equation
\begin{equation}\label{eqnspaceoforderedindices}
\sum_{k=M+1}^{N}x_{k}=-\mathrm{weight}(I_{0})+w\,,
\quad
x_{k}\in \{\alpha_{1},\alpha_{2},\cdots ,\alpha_{D}\}\,,
\end{equation}
where the subindex $k$ of $x_k$ labels different components $V_k$, and the subindex $r$ of $\alpha_r$ labels different basis vectors in each $V_k$.
We denote the latter set of solutions by $\mathcal{X}$.

Note that different elements in $\mathcal{X}$ can correspond to the same partition:
permuting an $(N-M)$-tuple can give a different $(N-M)$-tuple but
they correspond to the same partition.
After modulo the action by the symmetry group $\mathfrak{S}_{N-M}$, the set $\mathcal{X}/\mathfrak{S}_{N-M}$ of cosets
is then in one-to-one correspondence with the set of partitions
(with all elements in a partition valued in $\{\alpha_{1},\alpha_{2},\cdots ,\alpha_{D}\}$) of the following function $S$ of weight($I_0$) and $w$
\begin{equation}
S=-\mathrm{weight}(I_{0})+w\,.
\end{equation}
More precisely, any element in the set $\mathcal{X}/\mathfrak{S}_{N-M}$, denoted by $[x]$,
is given by a partition
\begin{equation}\label{eqnspaceofpartitions}
\alpha_{1}\cdot n_{1}([x])+\alpha_{2}\cdot n_{2}([x])+\cdots + \alpha_{D} n_{D}([x])\,.
\end{equation}
Here $n_{r}([x]), r=1,2,\cdots, D$ is the frequency of $\alpha_{r}$ appearing
in the partition, which is independent of the choice of the representative of the coset $[x]$.
They are subject to the conditions that
\begin{equation}\label{eqnconstantweightcombinatorics}
\sum_{r=1}^{D}\alpha_{r}n_{r}([x])=S\,,
\quad
\sum_{r=1}^{D} n_{r}([x])=N-M\,.
\end{equation}

Therefore, we get the following result.

\begin{lemma}\label{lemmacombinatoricsofpartition}
There exists a nonzero solution $\{b_{r}\}_{r=1}^{D}$ to the equation
\begin{equation}\label{eqncombinatoricsofpartition}
\sum_{r=1}^{D}b_{r}n_{r}([x])=0\,,\quad
\forall \, [x]\in \mathcal{X}/\mathfrak{S}_{N-M}\,.
\end{equation}
An explicit one is given by
\begin{equation}\label{eqnsoltobr}
b_{r}=\alpha_{r}-{S\over N-M}\,,
\quad
r=1,2,\cdots, D\,.
\end{equation}
\end{lemma}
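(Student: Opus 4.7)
The statement has two parts: existence of a nonzero solution to the linear system indexed by partitions $[x]$, and verification that the explicit formula $b_r = \alpha_r - S/(N-M)$ is such a solution. My plan is to prove both parts simultaneously by direct verification, exploiting the two scalar constraints on the frequency vector $(n_1([x]), \ldots, n_D([x]))$ recorded in~\eqref{eqnconstantweightcombinatorics}.

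The approach is to substitute the proposed $b_r$ into $\sum_{r=1}^D b_r n_r([x])$ and split the sum into two pieces:
\begin{equation*}
\sum_{r=1}^D b_r n_r([x]) \;=\; \sum_{r=1}^D \alpha_r n_r([x]) \;-\; \frac{S}{N-M}\sum_{r=1}^D n_r([x]).
\end{equation*}
The first sum equals $S$ by the weight constraint $\sum_r \alpha_r n_r([x]) = S$, while the second sum equals $N-M$ by the cardinality constraint $\sum_r n_r([x]) = N-M$; multiplied by $S/(N-M)$ this also yields $S$, so the two pieces cancel. Crucially, this calculation is uniform in $[x]$, because the right-hand sides $S$ and $N-M$ do not depend on which partition $[x]$ is chosen (they depend only on $w$, $M$, and $\mathrm{weight}(I_0)$, all of which are fixed once we single out the diagonal entry $\rho^{\Lambda^c}_{(I_0;i_{M+1})}$).

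To finish the existence claim, I would note that the vector $(b_1,\ldots,b_D)$ is nonzero provided the weights $\alpha_1,\ldots,\alpha_D$ are not all equal, since $b_r = 0$ for every $r$ would force every $\alpha_r$ to equal the common value $S/(N-M)$. In the $\mathfrak{su}(2)$ setting of interest, $W$ contains at least two distinct weights as soon as it has any nontrivial irreducible summand, which is the regime where the lemma has content; the degenerate case $D=1$ or $W$ trivial gives an empty statement.

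The proof is therefore essentially a one-line algebraic identity, and I do not anticipate a serious obstacle. The only subtle point worth flagging is conceptual rather than technical: the whole argument relies on the fact that restricting to the constant-weight subspace converts the enumeration of multi-indices $K'$ contributing to the partial trace into an enumeration of partitions, so that the two equations of~\eqref{eqnconstantweightcombinatorics} become the only nontrivial affine constraints cut out by the restriction. Making this dictionary between indices and partitions explicit is what licenses the passage from the raw sum~\eqref{eqnsummationoverrexpanded} to the linear system on the $n_r([x])$.
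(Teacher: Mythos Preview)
Your proof is correct and matches the paper's approach exactly: the lemma is stated immediately after the two constraints \eqref{eqnconstantweightcombinatorics}, and its proof is simply the observation that the explicit $b_r$ makes $\sum_r b_r n_r([x])$ equal to $S - \frac{S}{N-M}(N-M) = 0$ for every partition $[x]$. Your remark on nonzeroness (distinct weights force some $b_r\neq 0$) fills in the one point the paper leaves implicit.
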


\begin{remark}
A natural question is about the uniqueness of the solution.
This will be addressed in Section \ref{sec:gen} below using the notion of frequency matrix.
\end{remark}

\begin{example}\label{exfrequencymatrix}
Take $N=5, j={1}$.
Then $D=3$ and $\alpha_{r}/2\in \{-1,0,1\}$ for $r\in \{1,2,3\}$.
Consider the case $M=1,w=0$.
Labeling the basis in $V_k$ by spin, which is half of the eigenvalue of $H$ acting on $V_k$, we then have the following
\begin{displaymath}
\begin{tabular}{ccccc}
         &  $\mathrm{weight}(I_{0})/2$  & $\{[x]\}$  & $(n_{r}([x])) $ & $b=(b_{r})^{t}$ \\
\hline
       & $-1$  &$\substack{ [1,0,0,0]\\
[1,0,-1,1] }$ &
$
\begin{pmatrix}
0&3 & 1\\
1&1 & 2
\end{pmatrix}
$
 &
$
\begin{pmatrix}
-{5/4}\\
-{1/4}\\
{3/4}
\end{pmatrix}
$
\\
\hline
     &  $0$  & $\substack{ [1,-1,1,-1]\\
[1,-1,0,0] \\
[0,0,0,0]}$   & $
\begin{pmatrix}
2&0 & 2\\
1&2 & 1\\
0&4 &0
\end{pmatrix}
$
 &
$
\begin{pmatrix}
-{1/4}\\
0\\
{1/ 4}
\end{pmatrix}
$  \\
\hline
     &  $1$  & $\substack{ [-1,0,-1,1]\\
[-1,0,0,0] }$   &
$
\begin{pmatrix}
2&1 & 1\\
1&3 & 0
\end{pmatrix}
$
 &
$
\begin{pmatrix}
-{3/ 4}\\
{1/ 4}\\
{5/4}
\end{pmatrix}
$
 \end{tabular}
\end{displaymath}

\end{example}

We can now prove the following theorem.

\begin{theorem}\label{thmeqnforpartialtraces}
Consider the constant weight subspace $V_{(w)}$ of $V$ consisting of vectors of weight $w$.
Fix an integer $1\leq M\leq N-1$ and an index $I_{0}\in \Theta_{D}^{\otimes M}$
such that the set $\mathcal{X}$ of solutions to
\eqref{eqnspaceoforderedindices} is non-empty.
Then there exists a nonzero solution $\{b_{r}\}_{r=1}^{D}$ to the equation
\begin{equation}\label{eqnforpartialtraces}
\sum_{r=1}^{D} b_{r}
\sum_{\substack{ *\in  \{M+1,M+2,\cdots, N\}  \\
e_r\in V_{*}}}\rho^{\{1,2,\cdots, M,*\}}_{(I_{0};r)}=0\,,
\end{equation}
where $e_r\in V_{*}$ indicates that $r$ labels different basis vectors in $V_*$.

\end{theorem}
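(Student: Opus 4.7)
The plan is to reduce the claimed identity to the purely combinatorial statement already established in Lemma~\ref{lemmacombinatoricsofpartition}. The key is that the coefficient of each $|a_{(I_0;J)}|^2$ appearing after the double sum on the left-hand side of \eqref{eqnforpartialtraces} depends only on the partition $[x]$ associated to the tuple $J$, and it is exactly $\sum_r b_r\, n_r([x])$.

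Concretely, I would start by substituting the definition \eqref{eqndiagonalpieces} (applied to each choice of $*$) to rewrite
\begin{equation*}
\sum_{r=1}^{D} b_{r}
\sum_{\substack{*\in\{M+1,\ldots,N\}\\ e_r\in V_{*}}}
\rho^{\{1,\ldots,M,*\}}_{(I_{0};r)}
=
\sum_{*=M+1}^{N}
\sum_{r=1}^{D} b_{r}
\sum_{K_{*}\in \Theta_D^{\otimes(N-M-1)}}
\bigl|a_{(I_{0};\,r\text{ at position }*;\,K_{*})}\bigr|^{2}\,.
\end{equation*}
I would then interchange the order of summation and reindex by grouping terms according to the full $(N-M)$-tuple $J=(j_{M+1},\ldots,j_{N})$ sitting in positions $\{M+1,\ldots,N\}$: for a fixed $J$, each position $*$ contributes the term with $r=j_{*}$ and $K_{*}=(j_{M+1},\ldots,\widehat{j_{*}},\ldots,j_{N})$, with coefficient $b_{j_{*}}$. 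Hence the coefficient of $|a_{(I_{0};J)}|^{2}$ is $\sum_{*=M+1}^{N} b_{j_{*}}$.

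The crucial step is to observe that, regrouping by the value $r$, one has $\sum_{*=M+1}^{N} b_{j_{*}}=\sum_{r=1}^{D} b_{r}\, n_{r}([J])$, where $n_{r}([J])$ is the number of positions at which $J$ takes the value $r$, i.e.\ the frequency in the partition $[J]$. Now there are two cases. If $J$ satisfies the constant-weight constraint \eqref{eqnspaceoforderedindices}, then $[J]\in \mathcal{X}/\mathfrak{S}_{N-M}$, and by Lemma~\ref{lemmacombinatoricsofpartition} the coefficient vanishes for the explicit choice \eqref{eqnsoltobr} of $\{b_{r}\}$. If $J$ does not satisfy this constraint, then the multi-index $(I_{0};J)$ has total weight different from $w$, so by hypothesis $\psi\in V_{(w)}$ forces $a_{(I_{0};J)}=0$, and the term drops out automatically.

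Combining both cases, the entire sum vanishes, proving \eqref{eqnforpartialtraces}. The only subtlety — not really an obstacle, but the one point one must be careful about — is the bookkeeping in the reindexing step, making sure that each tuple $J$ is produced exactly once per position $*$ and that the omitted-index notation $\widehat{j_{*}}$ correctly matches the structure of $K_{*}$. Everything else is a direct application of the lemma, whose nontrivial content (existence of a nonzero $\{b_{r}\}$ with the explicit solution \eqref{eqnsoltobr}) has already been verified.
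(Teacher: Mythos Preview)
Your proposal is correct and follows essentially the same approach as the paper: both expand the double sum via \eqref{eqndiagonalpieces}, reindex so that each term $|a_{(I_{0};J)}|^{2}$ acquires the coefficient $\sum_{*} b_{j_{*}}=\sum_{r} b_{r}\,n_{r}([J])$, and then invoke Lemma~\ref{lemmacombinatoricsofpartition}. The paper phrases the regrouping via the quotient $\mathcal{X}/\mathfrak{S}_{N-M}$ and a $\mathrm{Supp}$ function, whereas you treat each $J$ directly and dispose of the non-constant-weight tuples by $a_{(I_{0};J)}=0$; these are just cosmetic differences in bookkeeping.
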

\begin{proof}
Straightforward computation as in \eqref{eqnsummationoverrexpanded} shows that
\begin{eqnarray*}
&&\sum_{ *\in  \{M+1,M+2,\cdots, N\} }
\sum_{\substack{ i_*=1\\
e_{i_*}\in V_{*}}
}^{D} b_{i_*}
\rho^{\{1,2,\cdots, M,*\}}_{(I_{0};i_*)}
=
\sum_{*\in  \{M+1,M+2,\cdots, N\}}
\sum_{\substack{i_*=1\\e_{i_*}\in V_{*} }}^{D}b_{i_*}
\sum_{K_{*}  \in \Theta_D^{\otimes |\Lambda_{*}|}} |a_{(I_0; i_{*}; K_{*}) } |^2.
\end{eqnarray*}
After interchanging the order of summation on $i_*$ and $*$, one would obtain
\begin{eqnarray}\label{eqnintermediatesimplifcation}
\sum_{r=1}^{D}b_{r}
\sum_{\substack{*\in  \{M+1,M+2,\cdots, N\}\\ e_r\in V_{*}}}
\sum_{K_{*}  \in \Theta_D^{\otimes |\Lambda_{*}|}} |a_{(I_0; r; K_{*}) } |^2
=
\sum_{r=1}^{D}b_{r}
\sum_{\substack{*\in  \{M+1,M+2,\cdots, N\}\\ e_r\in V_{*}}}
\rho^{\{1,2,\cdots, M,*\}}_{(I_{0};r)}
\,.
\end{eqnarray}
Now we restrict ourselves to the constant weight subspace and hence replace the multi-index $(i_*;K_*)$
by a solution $x\in \mathcal{X}$ given in \eqref{eqnspaceoforderedindices}.
Then the LHS in \eqref{eqnintermediatesimplifcation}
gives
\begin{eqnarray}\label{eqnequivalentexpression}
\sum_{r=1}^{D}b_r \sum_{x \in \mathcal{X}} \sum_{\substack{ *\in \{M+1,M+2,\cdots,N\}  \\
x_*= \alpha_r} }
|a_{(I_{0}; x)}|^{2}\,.
\end{eqnarray}
Theorem \ref{thmeqnforpartialtraces} is then equivalent to the vanishing of
\eqref{eqnequivalentexpression} above.

We define $\mathrm{Supp}(x)$ to be the set of distinct values in the entries of $x$.
This gives a function on the set $\mathcal{X}$ of solutions.
It is obvious that it is invariant under the action by the group $\mathfrak{S}_{N-M}$ and hence descends to a function
on the set
$\mathcal{X}/\mathfrak{S}_{N-M}$ of partitions.
We then define
$\mathrm{Supp}([x])$ to be $\mathrm{Supp}(x)$ for any representation $x$ of $[x]$.
 Therefore,
\begin{eqnarray*}
&&
\sum_{r=1}^{D}b_r \sum_{x \in \mathcal{X}} \sum_{\substack{ *\in \{M+1,M+2,\cdots,N\}  \\
x_*= \alpha_r} }
|a_{(I_{0}; x)}|^{2}\\
&=&\sum_{x \in \mathcal{X}} \sum_{r=1}^{D}b_r  \sum_{\substack{ *\in \{M+1,M+2,\cdots,N\}  \\
x_*= \alpha_r} }
|a_{(I_{0}; x)}|^{2}\\
&=&
\sum_{x\in   \mathcal{X}}
\sum_{r:\, \alpha_r\in \mathrm{Supp}(x)}  b_r
n_{r}(x) |a_{(I_{0}; x)}|^{2}\\
&=&
\sum_{x\in  \mathcal{X}/\mathfrak{S}_{N-M} } \sum_{x\in   [x]}
\sum_{r: \,\alpha_r\in \mathrm{Supp}([x])}  b_{r}
n_{r}([x]) |a_{(I_{0}; x)}|^{2}\\
&=&\sum_{x\in  \mathcal{X}/\mathfrak{S}_{N-M} }  \sum_{ r: \,\alpha_r\in \mathrm{Supp}(x)}  b_{r}
n_{r}([x])  \sum_{x\in [x]} |a_{(I_{0}; x)}|^{2}\\
&=&\sum_{[x]\in  \mathcal{X}/\mathfrak{S}_{N-M}  }\left(  \sum_{ r: \,\alpha_r\in \mathrm{Supp}([x])}  b_{r}
n_{r}([x]) \right)\left(\sum_{\substack{x\in [x]} }|a_{(I_{0}; x)}|^{2}\right )
\,.
\end{eqnarray*}
For any partition $[x]$, if $\alpha_{r}\notin  \mathrm{Supp}([x])$, then
$n_{r}([x])=0$ automatically. It follows that
\begin{equation*}
\sum_{[x]\in  \mathcal{X}/\mathfrak{S}_{N-M}  }\left(  \sum_{ r:\, \alpha_r\in \mathrm{Supp}([x])}  b_{r}
n_{r}([x]) \right)\left(\sum_{\substack{x\in [x]} }|a_{(I_{0}; x)}|^{2}\right )
=
\sum_{[x]\in  \mathcal{X}/\mathfrak{S}_{N-M}  }\left(  \sum_{ r=1}^{D}  b_{r}
n_{r}([x]) \right)\left(\sum_{\substack{x\in [x]} }|a_{(I_{0}; x)}|^{2}\right )
\,.
\end{equation*}
This is vanishing due to the equation $\sum_{r=1}^{D}b_{r}n_{r}([x])=0$ for any $[x]\in  \mathcal{X}/\mathfrak{S}_{N-M} $, as proved in Lemma \ref{lemmacombinatoricsofpartition}.

\end{proof}

The above results exhibit only part of the combinatorial properties in partial trace.
The actual combinatorial structure in partial trace is much richer.
For example, the quantity considered in \eqref{eqnsummationoverrexpanded}
is closely related to
 \begin{equation}
\mathrm{Tr}_{V_{*}\otimes V_{\Lambda}}\, (\psi\otimes \psi^{*})\,,\quad
\Lambda\cup \{*\}=\{M+1,M+2,\cdots N\}\,,
\end{equation}
whose $(I_{0},I_{0})$-diagonal entry is
 \begin{equation}
\sum_{i_*=1}^{D}
\sum_{e_{i_*}\in V_{*}}
\rho_{(I_{0};i_* )}^{\{1,2,\cdots , M, *\} }\,.
\end{equation}
In particular, if we take $M=1$, then
$\mathrm{Tr}_{V_{*}\otimes V_{\Lambda}}$ defines
an element in $\mathrm{End} (V_{1})$ and we have
 \begin{equation}\label{eqntraceoftrace}
\sum_{*\in  \{2,3,\cdots, N\} }
\sum_{i_*=1}^{D}
\sum_{e_{i_*}\in V_{*}}
\rho_{(I_{0};i_* )}^{\{1, *\} }
=(N-1)\cdot \left( \mathrm{Tr}_{V_{*}\otimes V_{\Lambda}}\, (\psi\otimes \psi^{*})\right)_{(I_{0},I_{0})}\,。
\end{equation}
The summation of the above over $I_{0}$ gives the further trace over $V_{1}$. We therefore have
\begin{equation}
\sum_{I_{0}}
\sum_{*\in  \{2,3,\cdots, N\} }
\sum_{i_*=1}^{D}
\sum_{e_{i_*}\in V_{*}}
\rho_{(I_{0};r )}^{\{1, *\} }
=(N-1) \cdot \mathrm{Tr}_{V} \, (\psi\otimes \psi^{*})\,.
\end{equation}
When combined with Theorem \ref{thmeqnforpartialtraces}, this
will be useful in the  applications discussed in Section \ref{sec:app} below
where we shall prove that the converse statement is also true.

\subsection{The relation between different $M'<M$}
\label{sec:stru}

For different choices of $V_{\Lambda}$, the patterns in the combinatorics of the partial trace shown in Lemma \ref{lemmacombinatoricsofpartition}
in fact only depends on the cardinality $N-M-1$ of $\Lambda$.

For different values of $M$, Theorem \ref{thmeqnforpartialtraces} gives  different sets of relations.
We shall show in this section that the most informative one is
the one with largest possible $M$ subject to the condition $M+1\leq [{N\over 2}]$ (the least possible number of components being traced out according to \eqref{eqnqualitativeanalysis}),
the others are its consequences.\\

Fixing $M$,
consider another value $M'$ such that $M'<M$.
Our argument is by induction. Hence we shall
assume for now that $M'=M-1$.
We single out the component in $M-M'$.
Assume it is the first component, by permutation or relabeling if necessary.\\

Recall that the relations in
Lemma \ref{lemmacombinatoricsofpartition}
is about the combinatorics of $\mathcal{X}/\mathfrak{S}_{N-M}$.
We now show that the solution $\{b_{r}\}$ given in \eqref{eqnsoltobr}
implies the solution $\{b_{r}'\}$.
We start with the fact that each of the partitions $[x]$ satisfies
\begin{equation}
\sum b_{r}n_{r}([x])=0\,,
\quad
\sum n_{r}([x])=N-M\,,
\quad
\sum \alpha_{r} n_{r}([x])=S\,.
\end{equation}
Here the existence of $\{b_{r}\}$
is guaranteed  by induction hypothesis. Later we shall see that the solution given in \eqref{eqnsoltobr} is a natural one consistent with the induction procedure.

The goal is then to prove the existence of $\{b_{r}'\}$
such that the $'$-version of the above equations are satisfied, for any $[x']$.\\

Choose a value $s$ for the first component in the process of taking partial trace
over the $N-M'$ components.
We can then classify $x'$ into two sets: one involves $s$ and the other one does not.
For those not involving the specified $s$, it must involve some other value $\tilde{s}$.
Then we apply the following same reasoning to $[x']=[x]+\tilde{s}$.

If we can prove the result for any possible value of $s$, then by exhausting all the possible values for $s$, we are done with the checking for any $[x']$, as
any $x'$ must be of the form $[x']=[x]+s$ for some $s$.\\

Hence it suffices to consider those involving any fixed value $s$, for which we have $[x']=[x]+s, [x]\in \mathcal{ X}/\mathfrak{S}_{N-M}$, with
\begin{equation}
\sum n_{r}([x'])=N-M'=N-M+1\,,
\quad
\sum  \alpha_{r} n_{r}([x'])=S'\,.
\end{equation}
We set
\begin{equation}\label{eqnbshift}
b_{r}'=b_{r}+\delta_r\,,
\end{equation}
for some $\delta_r$.
We want it to depend only on the
numbers $S\leq S'$ being partitioned and $M=M'+1$ so that we can proceed by induction.\\

Now we compute
\begin{eqnarray*}
\sum_{r=1}^{D} b_{r}' n_{r}([x'])
&=&\sum_{r=1}^{D} b_{r} n_{r}([x])
+\sum_{r=1}^{D} \delta_r n_{r}([x])
+b_{s}+\delta_r\,,\\
&=&(N-M)\delta_r
+b_{s}+\delta_r\\
&=&b_{s}
+(N-M+1)\delta_r \\
&=&b_{s}
+ (N-M')\delta_r\,.
\end{eqnarray*}
We set
\begin{equation}\label{eqnpatternforbr}
b_{r}= \alpha_{r}-{S\over N-M}\,,\quad
 \forall\, r=1,2,\cdots, D\,,
\end{equation}
and
\begin{equation}
\delta_r={S\over N-M}-{S'\over N-M'}\,, \quad \forall\, r=1,2,\cdots, D\,.
\end{equation}
This then
does the job $\sum_{r=1}^{D} b_{r}' n_{r}([x'])=0$.
In fact, from this one can see that $\delta_{r}$ is independent of $r$.
Furthermore, one has from the above and  \eqref{eqnbshift} that
\begin{equation}
b_{r}'= \alpha_{r}-{S'\over N-M'}\,.
\end{equation}
Hence it keeps the pattern for $b_{r}$ shown in \eqref{eqnpatternforbr} unchanged.
Therefore, one can proceed by induction.

\section{Applications}
\label{sec:app}

Theorem~\ref{thmeqnforpartialtraces} is a strong structural condition on the partial trace
$\mathrm{Tr}_{V_{\Lambda}}( \psi\otimes \psi^{*} )\in \mathrm{End}(V_{\Lambda^c})$.
One immediate application is for the overlapping quantum marginal problem when restricting to the
constant weight subspace. For overlapping quantum marginal problem, very few results were known~\cite{chen2014symmetric,chen2016detecting} and most of them can only be applied to small systems. To the best of our knowledge, no further conditions are known if we restrict the pre-image to lie in a given subspace.  \\

To make things precise, we first give the definitions of density operator and density matrix, which are the practical notions in talking about distributions in probability theory.
\begin{definition}[Density operator and density matrix]
Suppose $E$ is a Hermitian vector space.
A density operator $\varrho$
is an element in $\mathrm{End}(E)$
satisfying
\begin{itemize}
\item It is normalized in the sense that $\mathrm{Tr} \,\varrho=1$.
\item It is a self-adjoint, positive definite operator.
\end{itemize}
Fixing an orthonormal basis $\{f_{L}\}$ for $E$, then the density operator $\varrho$ is represented by a matrix $(\varrho_{LL' })$ called the density matrix.
The self-adjoint property translates into the property that the density matrix is Hermitian.
We denote its diagonal entries
by
\begin{equation}
\rho_{L}:=\varrho_{LL}\,.
\end{equation}
\end{definition}

For example, for any unit norm vector $v\in E$, the operator
$v\otimes  v^{*}$
gives a density operator.

\subsection{Quantum marginal problem}

The quantum marginal problem is formulated in the following way.
Consider the Hermitian vector space
$V=\bigotimes_{k=1}^{N} V_{k}$.
For each subset $\{i,j\}\subseteq \{1,2,\cdots, N\}$, define
\begin{equation}
\Lambda_{ij}^{c}:=\{i,j\}\,,
\quad
\Lambda_{ij}:=
\{1,2,\cdots, N\}-\{i,j\}\,.
\end{equation}
Given a collection of density operators $\{\varrho^{\Lambda^{c}_{ij}}\}$, consisting of one density operator(called two-body below) $\varrho^{\Lambda^{c}_{ij}}$ for each $\Lambda^{c}_{ij}$, we want to ask whether there exists a density operator $\varrho^{\{1,2,\cdots, N\}}=\psi\otimes \psi^{*}$ on $V$, supported on the subspace $V_{(w)}$ of constant weight $w$, such that its partial trace over $V_{\Lambda_{ij}}$ satisfies the following relation
\begin{equation}\label{eqnquantummarginal}
\mathrm{Tr}_{V_{\Lambda_{ij}}}\,\varrho^{\{1,2,\cdots, N\}}=\varrho^{\Lambda^{c}_{ij}}\,.
\quad
\forall\, \{i,j\}\subseteq \{1,2,\cdots, N\}\,,
\end{equation}

When there exists such a $\varrho^{\{1,2,\cdots, N\}}$, then $\psi$ is a state in the constant weight subspace $V_{(w)}$.

If not, then there could be two possibilities:

1. there does not exist $\varrho^{\{1,2,\cdots, N\}}$ at all, either on the constant weight subspace or not;

2. there exist some global states but none of them is in a constant weight subspace.\\

Our results of Theorem~\ref{thmeqnforpartialtraces} directly give necessary conditions for this problem.
With respect to some given orthonormal basis of $V$ of the form \eqref{eqnbasisforV}, which is induced by those on the components $V_{k}$,
the diagonal entries of $\varrho^{\Lambda^{c}_{ij}}$
are given by $\{\rho^{\Lambda^{c}_{ij}}_{L}\}_{L \in \Theta_D^{|\Lambda^c|}}$.
If \eqref{eqnquantummarginal}
is true, then these diagonal entries $\{\rho^{\Lambda^{c}_{ij}}_{L}\}_{L \in \Theta_D^{|\Lambda^c|}}$
must coincide with the ones
defined in
\eqref{eqndfnrhodiagonal} in Section \ref{sec:main}.
Hence one must have, for the solution $\{b_{r}\}$ given in \eqref{eqnsoltobr} (which in particular depends on $I_{0}$), the following relations provided in
\eqref{eqnforpartialtraces}:
\begin{equation}\label{eqnstrongnecessaryconditionforconstantweight}
\sum_{r=1}^{D} b_{r}
\sum_{\substack{ *\in  \{M+1,M+2,\cdots, N\}  \\
e_r\in V_{*}}}\rho^{\{1,2,\cdots, M,*\}}_{(I_{0};r)}=0\,,
\quad
\forall\, I_{0}\,,
\end{equation}
as well as the relations obtained by moving the index set $\{1,2,\cdots, M\}$ inside $\{1,2,\cdots, N\}$.

For instance, we can take $M=1$, our result then leads to a new necessary condition for the set of the density operators $(\varrho^{\{1,2\}},\varrho^{\{1,3\}},\cdots,\varrho^{\{1,N\}})$ having a lift into a constant weight subspace:
\begin{equation}\label{eqnweaknecessaryconditionforconstantweight}
\sum\limits_{r=1}^D b_r \sum\limits_{{*\in \{2,\cdots,N\}}\atop{e_r\in V_{*}}}\rho_{(I_0:r)}^{\{1,*\}}=0\,,
\quad
\forall\, I_0\,.
\end{equation}
These linear constraints cannot be obtained from the trivial conditions
\begin{equation}\label{eqntrivialrelation}
\mathrm{Tr}_{V_{p}}(\varrho^{\{1,p\}})=\mathrm{Tr}_{V_{q}}(\varrho^{\{1,q\}})\,,
\quad
\forall\, 2\leq p< q\leq N\,.
\end{equation}

\medskip{}

Here is another closely related problem.
Assuming that $\{\varrho^{\Lambda^{c}_{ij}}\}$ indeed descends from a density operator $\varrho^{\{1,2,\cdots N\}}=\psi\otimes \psi^{*}$, we want to know
to what extent we can know the property of $\psi$ (e.g., deviation from being supported on a constant weight subspace) from the condition \eqref{eqnweaknecessaryconditionforconstantweight} and its permutations.

We now show that the necessary condition
\eqref{eqnweaknecessaryconditionforconstantweight} and its permutations
provided by Theorem~\ref{thmeqnforpartialtraces} is actually sufficient, provided that the
above assumption that $\{\varrho^{\Lambda^{c}_{ij}}\}$ descends from a
density operator $\varrho^{\{1,2,\cdots N\}}=\psi\otimes \psi^{*}$ is met.
Note that the condition \eqref{eqnweaknecessaryconditionforconstantweight} and its permutations are much weaker than the set of relations obtained by permuting
\eqref{eqnstrongnecessaryconditionforconstantweight}.\\

To see this, we assume that \eqref{eqnweaknecessaryconditionforconstantweight} and its permutations are met for a set of $\{b_{r}\}$
given by \eqref{eqnsoltobr} for some $w_{0}$,
\begin{equation}
b_{r}=\alpha_{r}-{-\mathrm{weight}(I_{0})+w_{0}\over N-1}\,,
\quad
r=1,2,\cdots, D\,.
\end{equation}
Then we get
\begin{equation}\label{eqnasumtionw0}
\sum\limits_{r=1}^D (\alpha_{r}-{-\mathrm{weight}(I_{0})+w_{0}\over N-1})\sum\limits_{{*\in \{2,\cdots,N\}}\atop{e_r\in V_{*}}}\rho_{(I_0:r)}^{\{1,*\}}=0\,,
\quad
\forall I_0\,.
\end{equation}

We first decompose $\psi$ into a sum of its projections $\psi_{w}$ to the constant weight subspaces $V_{(w)}$
\begin{equation}
\psi=\sum_{w}\sum_{e_I\in V_{(w)}} a_{I}e_{I}
:=\sum_{w} \psi_{w}\,,\quad a_{I}\in \mathbb{C}\,.
\end{equation}
Then since different $V_{(w)}$'s are orthogonal, we have
\begin{equation}
\mathrm{Tr}_{V_{\Lambda_{1*}}} (\psi\otimes \psi^{*}) =
\sum_{w} \mathrm{Tr}_{V_{\Lambda_{1*}}} (\psi_{w}\otimes \psi_{w}^{*})  \,.
\end{equation}
Denote the diagonal matrices of $\mathrm{Tr}_{V_{\Lambda_{1*}}} (\psi_{w}\otimes \psi_{w}^{*})$ by
$\rho_{(I_0:r)}^{\{1,*\}}(w) $.
Then we have for the diagonal entries that
\begin{equation}\label{eqndecompositionintoconstantweights}
\rho_{(I_0:r)}^{\{1,*\}}=\sum_{w} \rho_{(I_0:r)}^{\{1,*\}} (w)\,.
\end{equation}

Applying Theorem~\ref{thmeqnforpartialtraces} to each component $\psi_{w}$,
we get
\begin{equation}
\sum\limits_{r=1}^D (\alpha_{r}-{-\mathrm{weight}(I_{0})+w\over N-1})\sum\limits_{{*\in \{2,\cdots,N\}}\atop{e_r\in V_{*}}}\rho_{(I_0:r)}^{\{1,*\}} (w)=0\,,
\quad
\forall I_0\,.
\end{equation}
This gives
\begin{equation}\label{eqnrelationintoconstantweights}
\sum_{w}\sum\limits_{r=1}^D (\alpha_{r}-{-\mathrm{weight}(I_{0})+w\over N-1})\sum\limits_{{*\in \{2,\cdots,N\}}\atop{e_r\in V_{*}}}\rho_{(I_0:r)}^{\{1,*\}} (w)=0\,,
\quad
\forall I_0\,.
\end{equation}

On the other hand, from \eqref{eqnasumtionw0}, \eqref{eqndecompositionintoconstantweights}, we also have
\begin{equation}
\sum\limits_{r=1}^D (\alpha_{r}-{-\mathrm{weight}(I_{0})+w_{0}\over N-1})\sum_{w} \sum\limits_{{*\in \{2,\cdots,N\}}\atop{e_r\in V_{*}}}\rho_{(I_0:r)}^{\{1,*\}} (w)=0\,,
\quad
\forall I_0\,.
\end{equation}
One can change the order of summation on $r$ and $w$ and get
\begin{equation}\label{eqnrelationintow0}
\sum_{w}\sum\limits_{r=1}^D (\alpha_{r}-{-\mathrm{weight}(I_{0})+w_{0}\over N-1}) \sum\limits_{{*\in \{2,\cdots,N\}}\atop{e_r\in V_{*}}}\rho_{(I_0:r)}^{\{1,*\}} (w)=0\,,
\quad
\forall I_0\,.
\end{equation}

Taking the difference between \eqref{eqnrelationintoconstantweights} and \eqref{eqnrelationintow0}, we obtain
\begin{equation}
\sum_{w}\sum\limits_{r=1}^D {w-w_{0}\over N-1} \sum\limits_{{*\in \{2,\cdots,N\}}\atop{e_r\in V_{*}}}\rho_{(I_0:r)}^{\{1,*\}} (w)=0\,,
\quad
\forall I_0\,.
\end{equation}
Simplifying this relation a little further, we get
\begin{equation}\label{eqnsimplifedsumbeforesumI0}
\sum_{w}{w-w_{0}\over N-1}  \sum_{*\in \{2,\cdots,N\} }\left(\sum\limits_{i_*=1}^D  \sum\limits_{e_{i_*}\in V_{*}}\rho_{(I_0:i_*)}^{\{1,*\}} (w)\right)=0\,,
\quad
\forall I_0\,.
\end{equation}
The combinatorics in \eqref{eqntraceoftrace} tells that
 \begin{eqnarray}
&\sum_{i_*=1}^{D}
\sum_{e_{i_*}\in V_{*}}
\rho_{(I_{0};i_* )}^{\{1, *\} }(w)
= \left( \mathrm{Tr}_{V_{*}\otimes V_{\Lambda}}\, (\psi_{w}\otimes \psi_{w}^{*})\right)_{(I_{0},I_{0})} \,,\\
&\sum_{*\in \{2,\cdots,N\} }
\sum_{i_*=1}^{D}
\sum_{e_{i_*}\in V_{*}}
\rho_{(I_{0};i_* )}^{\{1, *\} }(w)
=(N-1) \left( \mathrm{Tr}_{V_{\{2,3,\cdots, N\}}}\, (\psi_{w}\otimes \psi_{w}^{*})\right)_{(I_{0},I_{0})} \,.
\end{eqnarray}
Then it follows from \eqref{eqnsimplifedsumbeforesumI0}
that
\begin{equation}\label{eqnexpectationvaluecondition}
\sum_{w}(w-w_{0}) \left( \mathrm{Tr}_{V_{\{2,3,\cdots, N\}}}\, (\psi_{w}\otimes \psi_{w}^{*})\right)_{(I_{0},I_{0})} =0\,,
\quad
\forall I_0\,.
\end{equation}

In particular, a consequence of this says that
the expectation value of the operator $\mathbf{H}=\sum_{k}H_k$ on the density operator $\varrho^{\{1,2,\cdots,N\}}$
is the same as that of the constant operator $w_{0}$. Here $H_k=\mathrm{Id}\otimes\cdots \otimes H\otimes \cdots \mathrm{Id}$ acts nontrivially on $V_k$ only, hence $\mathbf{H}$ must be diagonal since $H$ is in the Cartan subalgebra.
\begin{eqnarray} \label{eqnweakerconsequence}
0&=&\sum_{I_{0}}\sum_{w}(w-w_{0}) \left( \mathrm{Tr}_{V_{\{2,3,\cdots, N\}}}\, (\psi_{w}\otimes \psi_{w}^{*})\right)_{(I_{0},I_{0})} \nonumber\\
&=&\sum_{I_{0}} \sum_{w}
w\left( \mathrm{Tr}_{V_{\{2,3,\cdots, N\}}}\, (\psi_{w}\otimes \psi_{w}^{*})\right)_{(I_{0},I_{0})}
-\sum_{I_{0}} \sum_{w}
w_{0}\left( \mathrm{Tr}_{V_{\{2,3,\cdots, N\}}}\, (\psi_{w}\otimes \psi_{w}^{*})\right)_{(I_{0},I_{0})}  \nonumber \\
&=&\sum_{w}w\sum_{I_{0}}
\left( \mathrm{Tr}_{V_{\{2,3,\cdots, N\}}}\, (\psi_{w}\otimes \psi_{w}^{*})\right)_{(I_{0},I_{0})}
-w_{0} \sum_{w}\sum_{I_{0}}
\left( \mathrm{Tr}_{V_{\{2,3,\cdots, N\}}}\, (\psi_{w}\otimes \psi_{w}^{*})\right)_{(I_{0},I_{0})}  \nonumber \\
&=&\sum_{w}w\cdot \mathrm{Tr}_{V}\, (\psi_{w}\otimes \psi_{w}^{*})
-w_{0} \sum_{w} \mathrm{Tr}_{V}\, (\psi_{w}\otimes \psi_{w}^{*})  \nonumber \\
&=&
\mathrm{Tr}_{V} (\mathbf{H} \varrho^{\{1,2,\cdots, N\}} )
-w_{0}\cdot  \mathrm{Tr}_{V } (   \varrho^{\{1,2,\cdots, N\}} ) \,.
\end{eqnarray}

We remark that \eqref{eqnexpectationvaluecondition} is in fact stronger than this.
From \eqref{eqnentriesofpartialtrace} we can see that
$\mathrm{Tr}_{V_{\{2,3,\cdots, N\}} }\, (\psi_{w}\otimes \psi_{w}^{*})$
only have diagonal terms: this is special as the leftover after the partial trace has only one component.
Therefore, \eqref{eqnexpectationvaluecondition}
actually means that the following two are equivalent operators (contrasting \eqref{eqntrivialrelation})
\begin{equation} \label{eqnstrongerconsequence}
\sum_{w}
 w\left( \mathrm{Tr}_{V_{\{2,3,\cdots, N\}}}\, (\psi_{w}\otimes \psi_{w}^{*})\right)
=\sum_{w}  w_{0}\left( \mathrm{Tr}_{V_{\{2,3,\cdots, N\}}}\, (\psi_{w}\otimes \psi_{w}^{*})\right) \,.
\end{equation}
Taking further the partial trace over $V_{1}$ then yields \eqref{eqnweakerconsequence}.\\

We have shown in \eqref{eqnstrongerconsequence}
that
\begin{equation}
 \mathrm{Tr}_{V_{\{2,3,\cdots, N\}}}\,
(\mathbf{H}-H_{1})\psi\otimes \psi^{*}
+(H_{1}-w_{0})  \mathrm{Tr}_{V_{\{2,3,\cdots, N\}}}\,\psi\otimes \psi^{*}=0\,.
\end{equation}
Permuting the index from $1$ to $k$ gives
\begin{equation}
 \mathrm{Tr}_{V_{\{1,2,3,\cdots, N\}-\{k\}}}\,
(\mathbf{H}-H_{k})\psi\otimes \psi^{*}
+(H_{k}-w_{0})  \mathrm{Tr}_{V_{\{1,2,3,\cdots, N\}-\{k\}}}\,\psi\otimes \psi^{*}=0\,,
\quad
\forall k=1,2,\cdots N\,.
\end{equation}
Multiplying this by a polynomial operator $f(H_{k})$,
and summing over $k$,
we then get
\begin{equation}
 \mathrm{Tr}_{V}\,   \sum_{k=1}^{N}\left(
(\mathbf{H}-w_{0}) f(H_{k})\right)
\psi\otimes \psi^{*}=0\,.
\end{equation}

Taking $f(H_{k})=w_{0}$ gives
\begin{equation}
 \mathrm{Tr}_{V}\,
 (N \mathbf{H} w_{0}-N w_{0}^2)
\psi\otimes \psi^{*}=0\,.
\end{equation}
Taking $f(H_{k})=H_{k}$ yields
\begin{equation}
 \mathrm{Tr}_{V}\,
 (\mathbf{H}^2-\mathbf{H} w_{0})
\psi\otimes \psi^{*}=0\,.
\end{equation}
Combining the above two, we get
\begin{equation}
 \mathrm{Tr}_{V}\,
 (\mathbf{H}-w_{0})^2
\psi\otimes \psi^{*}=
\sum_{w}
 \mathrm{Tr}_{V_{(w)  }}
  (w-w_{0})^2
\psi_{w}\otimes \psi^{*}_{w}=0
\,.
\end{equation}
This can be true only when $\psi\otimes \psi^{*}$
is supported on the constant weight subspace
$V_{(w_{0})}$.\\

Hence we have shown that one can determine whether a state is supported on a constant weight subspace by all of its two-body local information.
The proof above also shows that the vanishing of fluctuation of $H$ would give another necessary and sufficient condition to this problem.
However, in the case leakage exists, our method gives a more practical and powerful criteria than the mean and fluctuation method.

\subsection{Perfect tensor}

As another concrete example of our applications, we now use our conditions to study the notion of perfect tensor, which is recently widely studied in the theory of AdS/CFT, and is understood as an interesting proposal to realize the holographic principle in many-body quantum systems. Perfect tensors can build tensor network state exhibiting interesting holographic correspondence \cite{pastawski2015holographic}. In particular, the tensor network made by perfect tensors derives the Ryu-Takayanagi formula of holographic entanglement entropy, namely, the entanglement entropy of
the boundary quantum system equals the minimal surface area in the bulk \cite{qi2013exact,han2017loop}.

Furthermore, recently it has been shown that perfect tensors can represent quantum channels which are of strongest quantum chaos \cite{hosur2016chaos}. The quantum transition defined by perfect tensors turns out to maximally scramble the quantum information such that the initial state cannot be recovered by local measurements. It has also been suggested that a perfect tensor should represent the holographic quantum system dual to the bulk quantum gravity with a black hole.

\begin{definition}[Perfect tensor]\label{dfnperfecttensor}
A vector $\psi\in V$ is called a perfect tensor if for all possible choices $\Lambda$ satisfying $|\Lambda|\geq\frac{N}{2}$, the condition
\begin{equation}
\mathrm{Tr}_{V_{\Lambda}}( \psi\otimes \psi^{*} )=c_{|\Lambda|}\cdot \mathbb{I}_{V_{\Lambda^{c}}}\,
\end{equation}
is satisfied for some non-vanishing constant $c_{|\Lambda|}$.
\end{definition}

The following result follows from Theorem \ref{thmeqnforpartialtraces}.

\begin{theorem}\label{thmeqnforperinvtensor}
Fixing $N\geq 4$,
then for any $w$, there does not exist a perfect tensor in the constant weight subspace with weight $w$.
\end{theorem}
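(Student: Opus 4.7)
The plan is to combine the perfect tensor condition applied to $2$-body marginals with the linear relation of Theorem \ref{thmeqnforpartialtraces} for $M=1$, and reach a contradiction via a carefully chosen index. Since $N\geq 4$, for any $*\in\{2,\ldots,N\}$ the complementary set $\Lambda_{1*}:=\{1,\ldots,N\}\setminus\{1,*\}$ has $|\Lambda_{1*}|=N-2\geq N/2$, so the perfect tensor hypothesis forces $\mathrm{Tr}_{V_{\Lambda_{1*}}}(\psi\otimes \psi^{*})=c_{2}\,\mathbb{I}_{V_{1}\otimes V_{*}}$ with $c_{2}\neq 0$. In particular, every diagonal entry satisfies $\rho^{\{1,*\}}_{(i_{0};r)}=c_{2}$ for all $i_{0},r\in\{1,\ldots,D\}$ and all $*\in\{2,\ldots,N\}$. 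The strategy is to substitute this constant diagonal into the linear relation of Theorem \ref{thmeqnforpartialtraces} for a well-chosen $I_{0}=(i_{0})$.

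The heart of the argument is to locate an index $i_{0}\in\{1,\ldots,D\}$ satisfying (i) $\alpha_{i_{0}}\neq w$, and (ii) the partition set $\mathcal{X}$ of Section \ref{sec:combinatorics} is non-empty for $I_{0}=(i_{0})$. If $V_{(w)}$ is trivial the statement is vacuous, so assume there exists a basis multi-index $(j_{1},\ldots,j_{N})$ with $\sum_{k}\alpha_{j_{k}}=w$. If some coordinate satisfies $\alpha_{j_{k}}\neq w$, pick $i_{0}=j_{k}$ and use the remaining coordinates to supply the completion in $\mathcal{X}$. The only alternative is $\alpha_{j_{k}}=w$ for every $k$, which forces $Nw=w$, hence $w=0$ and each $\alpha_{j_{k}}=0$; this requires $j$ to be an integer, and in that subcase I would instead use a non-constant weight-zero configuration such as $(2j,-2j,0,\ldots,0)$ and draw $i_{0}$ from its nonzero entries. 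Thus such an $i_{0}$ always exists.

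With this $i_{0}$ fixed, Theorem \ref{thmeqnforpartialtraces} together with the explicit solution $b_{r}=\alpha_{r}-S/(N-1)$ from Lemma \ref{lemmacombinatoricsofpartition} (with $S=w-\alpha_{i_{0}}$) yields
\begin{equation*}
0 \;=\; \sum_{r=1}^{D} b_{r} \sum_{*=2}^{N} \rho^{\{1,*\}}_{(i_{0};r)} \;=\; c_{2}(N-1)\sum_{r=1}^{D} b_{r}.
\end{equation*}
Using that the weights of any finite-dimensional $\mathfrak{su}(2)$-representation are symmetric about zero, $\sum_{r=1}^{D}\alpha_{r}=0$, and therefore
\begin{equation*}
\sum_{r=1}^{D} b_{r} \;=\; -\frac{DS}{N-1} \;=\; -\frac{D(w-\alpha_{i_{0}})}{N-1} \;\neq\; 0,
\end{equation*}
by the choice of $i_{0}$ combined with $D\geq 2$. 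Together with $c_{2}\neq 0$, this contradicts the preceding display. The main technical obstacle is the existence of the distinguished index $i_{0}$, handled by the small case analysis of the middle paragraph; everything else is a direct algebraic consequence of the machinery of Section \ref{sec:combinatorics} together with the $\mathrm{SU}(2)$ weight-symmetry $\sum_{r}\alpha_{r}=0$.
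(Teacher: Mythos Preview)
Your proof is correct and follows essentially the same route as the paper: assume a perfect tensor, use the perfect-tensor condition to make all the diagonal entries $\rho^{\Lambda^{c}}_{(I_{0};r)}$ equal to a nonzero constant, plug into Theorem~\ref{thmeqnforpartialtraces}, and invoke $\sum_{r}\alpha_{r}=0$ to reduce the contradiction to $S=w-\mathrm{weight}(I_{0})\neq 0$. The only differences are cosmetic---you fix $M=1$ from the outset and give an explicit case analysis for the existence of a suitable $I_{0}$ (and the non-emptiness of $\mathcal{X}$), whereas the paper keeps $M$ general and simply asserts that such an $I_{0}$ can be chosen.
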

\begin{proof}
We prove by contradiction.
Suppose there exists a perfect tensor $\psi$.
We can then take $\Lambda^{c}$ with cardinality $M+1$ such that the condition
in \eqref{eqnqualitativeanalysis} is fullfilled.
That is,
\begin{equation}\label{eqnrequirementfornonemptynessofI0}
M+1\leq N-[{N+1\over 2}]=[{N\over 2}]\,.
\end{equation}
Then according to Definition \ref{dfnperfecttensor}, one must have
\begin{equation}
\mathrm{Tr}_{V_{\Lambda}}( \psi\otimes \psi^{*} )=c_{|\Lambda|}\cdot \mathbb{I}_{V_{\Lambda^{c}}}\,,
\end{equation}
for some non-vanishing constant $c_{|\Lambda|}$.
It is easy to see that $c_{|\Lambda|}$ only depends on the cardinality of $\Lambda$: the further trace over $V_{\Lambda^{c}}$
should give a multiple of the identity endomorphism which is independent of
the choice of $\Lambda$.

We now consider the entries $ \rho_{(I_{0};r )}^{{\Lambda^{c}}}$ constructed in
\eqref{eqndiagonalpieces}.
All of them are equal to $c_{|\Lambda|}$ which without loss of generality
can be normalized to $1$.
Then we have

 \begin{equation}
\rho_{(I_{0};r )}^{{\Lambda^{c}}}
=
1 \,,
\quad
\forall\, (I_{0};r)\in \Theta_D^{\Lambda^{c}}\,.
\end{equation}

We now show that if $M\neq 0$, that is, the set $I_{0}$ is nonempty, then there always exists $I_0$ such that
\begin{equation}\label{eqnconditionforcontradiction}
\sum_{r=1}^{D} b_{r}
\neq 0\,.
\end{equation}
The condition
$M\geq 1$ requires $N\geq 4$ according to \eqref{eqnrequirementfornonemptynessofI0}.
To check the condition \eqref{eqnconditionforcontradiction}, we compute
\begin{equation}
\sum_{r=1}^{D}b_{r}=\sum_{r=1}^{D} \alpha_{r}- D\cdot {S\over N-M}\,.
\end{equation}
Due to the structure theory of representations, one has $\sum_{r=1}^{D} \alpha_{r}=0$.
Hence the condition boils down to
\begin{equation}
S=-\mathrm{weight}(I_{0})+w\neq 0\,.
\end{equation}
This can always be satisfied by choosing a suitable $I_{0}$,
which is contradictory with the claim in Theorem \ref{thmeqnforpartialtraces}.
Hence there does not exist such a perfect tensor.
\end{proof}
\medskip{}

Perfect tensors also have an intimate connection to quantum error-correcting codes~\cite{raissi2018optimal}. An $N$-spin perfect tensor can be equivalently viewed as a length-$N$ quantum error-correcting code encoding a single quantum state,  with the code
distance $\delta=\lfloor N/2\rfloor+1$. Our results hence indicates in the constant weight subspace, there is no such
code exist.

We will now use our results to further understand the existence of invariant perfect tensors. Invariant tensors are the tensors in $V$ with vanishing total angular momentum. They play an important role in the theory of loop quantum gravity \cite{han2007fundamental,rovelli2014covariant}, and particularly the structure of Spin-Networks \cite{Penrose,rovelli1995spin}. Spin-network states, as quantum states of gravity, are networks of invariant tensors, and represent the quantization of geometry at the Planck scale. Classically an arbitrary three-dimensional geometry can be discretized and built piece by piece by gluing polyhedral geometries. The spin-network state built by invariant tensors quantizes the geometry made by polyhedra. As the building block of spin-network, the $n$-valent invariant tensor represents the quantum geometry of a polyhedron with $n$ faces. The reason in brief is that the quantum constraint equation $\sum_{i=1}^n{J}_i\psi=0$ (vanishing total angular momentum) is a quantum analog of the polyhedron closure condition $\sum_{i=1}^n\vec{A}_i=0$ in three-dimensional space (see e.g. Appendix A in \cite{li2017invariant} for details).

Given that both invariant tensors and perfect tensors relate to quantum gravity from different perspectives, it is interesting to incorporate the idea of perfect tensors with that of invariant tensors, and define a new concept that we call invariant perfect tensor.

\begin{definition}[Invariant perfect tensor]\label{dfnperfectinvtensor}
A nonzero vector $\psi\in V$ is called an invariant tensor if
\begin{equation}
H\psi=0,\ X\psi=0,\ Y\psi=0\,.
\end{equation}
A nonzero vector $\psi\in V$ is called an invariant perfect tensor if it is both perfect and invariant.
\end{definition}

A partial study of invariant perfect tensors has been carried out in \cite{li2017invariant}, which shows that at $N=2,3$ invariant tensors are always perfect, but strictly there is no invariant perfect tensor at $N=4$, although invariant tensors generically approximate perfect tensors asymptotically in large $j$.

The result in Theorem \ref{thmeqnforperinvtensor} generalizes the conclusion for invariant perfect tensor to arbitrary $N\geq4$. Since invariant tensors live in the constant weight $w=0$ subspace, we obtain the following.

\begin{corollary}
There does not exist invariant perfect tensor for any $j$, for $N\geq 4$.
\end{corollary}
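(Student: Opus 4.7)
The plan is to derive the corollary as an immediate specialization of Theorem \ref{thmeqnforperinvtensor}. The only thing to establish is that any invariant tensor (in the sense of Definition \ref{dfnperfectinvtensor}) is forced to live inside a constant weight subspace with $w=0$; once this is done, the non-existence statement for $N\geq 4$ is a direct quotation of the theorem.

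First I would observe that $H$ acts on $V=\bigotimes_{k=1}^{N} V_k$ as the sum $\mathbf{H}=\sum_{k=1}^{N} H_k$ of the single-site operators, and its eigenvectors in the basis $\{e_I\}$ are precisely the basis vectors themselves, with eigenvalue $\mathrm{weight}(I)$. Therefore the eigenspace of $\mathbf{H}$ with eigenvalue $0$ is exactly the constant weight subspace $V_{(0)}$. Since an invariant tensor $\psi$ satisfies $H\psi=0$ in particular, we conclude $\psi\in V_{(0)}$. (The conditions $X\psi=0$ and $Y\psi=0$ are not needed for this step; they would only further restrict $\psi$ to the highest-weight invariants, which is a stronger condition than we need here.)

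Next I would simply invoke Theorem \ref{thmeqnforperinvtensor} with $w=0$: for any $N\geq 4$ there is no perfect tensor inside $V_{(0)}$. Combining with the previous paragraph, an invariant perfect tensor would have to lie in $V_{(0)}$ while simultaneously being perfect, which is impossible. This yields the corollary for all values of the single-particle spin $j$, since Theorem \ref{thmeqnforperinvtensor} was proved for arbitrary $D=2j+1$.

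There is no real obstacle here; the corollary is essentially a one-line consequence once the identification $H\psi=0 \Rightarrow \psi\in V_{(0)}$ is made explicit. If anything, the only subtlety worth flagging is that the hypothesis of Theorem \ref{thmeqnforperinvtensor} relies on the existence of a multi-index $I_0$ with $\mathrm{weight}(I_0)\neq w=0$ and a nonempty solution set $\mathcal{X}$; both are automatic as soon as $N\geq 4$ and $D\geq 2$, which are the standing assumptions.
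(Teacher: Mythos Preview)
Your proposal is correct and matches the paper's own argument exactly: the paper simply notes that invariant tensors lie in the constant weight subspace $V_{(0)}$ and then invokes Theorem~\ref{thmeqnforperinvtensor}. Your additional remarks (that only the condition $H\psi=0$ is needed, and the check that the hypotheses of the theorem are met for $N\geq 4$, $D\geq 2$) are sound elaborations of what the paper leaves implicit.
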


$N=3$ invariant tensors are employed in spin-network states for 2+1 dimensional gravity, while $N\geq4$ invariant tensors build spin-network states for 3+1 dimensional gravity~\cite{perez2003spin,baez1998spin}.
The above results show that the entanglement exhibited by the local building block of quantum gravity (at Planck scale) is not as much as a perfect tensor. So the holographic property of quantum states is obscure at the Planck scale. The holography displayed by quantum gravity at semi-classical level then suggests that in order to understand quantum gravity using tensor networks, the entanglement of perfect tensor, as being important to understand holography, should be a large scale effect coming from coarse-graining the Planck scale microstates. Namely although the perfect tensor is missing at the Planck scale, it may emerge approximately at the larger scale, and makes tensor networks demonstrate holography. This idea is very much consistent with the recent proposal in \cite{han2017loop}, which shows the spin-network states in 3+1 dimensions can indeed give tensor networks exhibiting holographic duality at the larger scale. Then it is interesting to understand how (approximate) perfect tensors emerge from non-perfect invariant tensors via coarse-graining from the Planck scale to larger scale. The research in this perspective will be reported in a future publication.

\section{Generalizations and extensions}

\subsection{Generalization to $\mathrm{SU}(n)$}
\label{sec:sun}

We have considered in the above the case where
 $V$ is the tensor product of $N$ copies of a not necessarily irreducible representation $W$ of $\mathrm{SU}(2)$.
We now generalize this to the $\mathrm{SU}(n)$ case.\\

Consider $V=\bigotimes_{k=1}^{N}V_{k}$, where all of the $V_{k}$'s
are given by
the same irreducible representation $W$ of $\mathrm{SU}(n)$ of dimension $D$.
Suppose a basis of the Cartan subalgebra of $\mathrm{SU}(n)$ is given by $H^{(1)},H^{(2)},\cdots, H^{(n-1)}$.
One then has the weight space decomposition
\begin{equation}\label{eqnSUndecomposition}
W=\bigoplus_{\vec{\alpha}\in \Delta} W_{\vec{\alpha}}\,,
\end{equation}
where $\Delta$
is the weight space and $W_{\vec{\alpha} }$ is the eigenspace with the weight vector $\vec{\alpha}=
(\alpha^{(1)},\alpha^{(2)},\cdots \alpha^{(n-1)}) $, that is
\begin{equation}
 W_{\vec{\alpha} }=\{v\in W~| H^{(i)} v=\alpha^{(i)} v\,,\forall i=1,2,\cdots, n-1\}\,.
\end{equation}
In particular, the action of any element
in the Cartan subalgebra, symbolically denoted by $H^{(*)}$, is diagonal on $W_{\vec{\alpha}}$
and hence on $W$.\\

The above decomposition is orthogonal.
We choose an orthonormal basis $e_{1},e_{2},\cdots, e_{D}$ whose eigenvalues under
$H^{(*)}$ are given by
$\vec{\alpha}_{1}^{(\star)},\vec{\alpha}_{2}^{(\star)},\cdots, \vec{\alpha}_{D}^{(\star)}$.

The constant-weight condition becomes the condition
that under the action of the Cartan subalgebra generated by $H^{(1)},H^{(2)},\cdots, H^{(n-1)}$, the weight vector is constant, say
$\vec{w}=(w^{(1)},w^{(2)},\cdots, w^{(n-1)})$.
In particular, the weight under $H^{(*)}$ is the fixed number $w^{(*)}$.

Then everything discussed in the $\mathrm{SU}(2)$ case follows.
The same reasoning also works when $W$ is not irreducible, in which case a similar orthogonal decomposition in
\eqref{eqnSUndecomposition} still exists, thanks to the structure theory for finite dimensional representations of the Lie group $\mathrm{SU}(n)$.

This then shows that there is no perfect tensor in a constant weight subspace
for the group $G=\mathrm{SU}(n)$ when $N\geq 4$.

\subsection{Relaxing the constant weight subspace condition}
\label{sec:gen}

We now discuss to what extent one can relax the constant weight condition.

Recall that the combinatorics in partial trace allows one to pass from the space $\mathcal{X}$ to its quotient
$\mathcal{X}/\mathfrak{S}_{N-M}$.
What makes the proof in
Theorem \ref{thmeqnforperinvtensor} work is the relation
\eqref{eqncombinatoricsofpartition} in Lemma
\ref{lemmacombinatoricsofpartition}
\begin{equation}
\sum_{r=1}^{D}b_{r}n_{r}([x])=0\,,\quad \forall\, [x]\in \mathcal{X}/\mathfrak{S}_{N-M}\,,
\end{equation}
with the condition in \eqref{eqnconditionforcontradiction}
\begin{equation}
\sum_{r=1}^{D} b_{r}
\neq 0\,.
\end{equation}
The condition
\begin{equation}
\sum_{r=1}^{D} n_{r}([x])=N-M\,,\quad \forall\, [x]\,\in \mathcal{X}/\mathfrak{S}_{N-M}
\end{equation}
 is automatically satisfied, according to \eqref{eqnconstantweightcombinatorics} which follows from the definition of $\mathcal{X}$.\\

Suppose we impose a certain constraint which is not necessarily the constant weight condition.
Assume that the
set of vectors satisfying this constraint, required to be independent of the ordering of the $N$ components, is indexed by the set $\mathcal{Y}$.
Denote the cardinality of the quotient $\mathcal{Y}/\mathfrak{S}_{N-M}$ by $P$.
Then the non-existence of perfect tensors in the space $\mathcal{Y}$ would follow if the following conditions are satisfied
\begin{equation}\label{eqngeneralizingconstraint}
\sum_{r=1}^{D}b_{r}n_{r}([y])=0\,,\quad
\sum_{r=1}^{D} b_{r}
\neq 0\,, \quad \forall\,[y]\in \mathcal{Y}/\mathfrak{S}_{N-M}\,.
\end{equation}
We fix a set of representatives $\{[y_{i}], i=1,2,\cdots, P\}$ for $\mathcal{Y}/\mathfrak{S}_{N-M}$, and denote
the matrix of frequencies by
\begin{equation}
A=(A_{ir})=(n_{r}([y_{i}]))_{i=1,2\cdots P;r=1,2\cdots D}\,.
\end{equation}
Then the above two equations become the conditions for the vector $b=(b_{1},\cdots, b_{D})^{t}$
\begin{equation}
Ab=0\,,
\quad
(1,1,\cdots, 1) b\neq 0\,.
\end{equation}
We denote $\tilde{A}$ to be the matrix obtained by adding a row of $1$'s below the $P$-th row of $A$.
The existence of such a vector $b$ is equivalent to the condition that
\begin{equation}\label{eqnrankconditions}
\mathrm{rank}\, A<D\,,
\quad  \mathrm{rank}\, \tilde{A}= \mathrm{rank}\, A+1\,.
\end{equation}

\begin{example}
Consider the case where each $V_{k}$ in $V=\bigotimes_{k=1}^{N} V_{k}$ is the irreducible representation $\mathrm{SU}(2)$ of dimension $D=2j+1$.
We put the constant weight condition.
This is the main interest in this work.
In this case, it is straightforward to show that the cardinality $P$ of $\mathcal{X}/\mathfrak{S}_{N-M}$ is
\begin{equation}
P=\mathrm{Coeff}_{S+D(N-M)}\prod_{r=1}^{D} {1\over 1-t_{r}}\,,
\end{equation}
where $t_{r}, r=1,2,\cdots D$ are formal parameters with grading $2r-1$ respectively,
and $\mathrm{Coeff}_{k}f(t_1,t_2,\cdots,t_D)$ is the sum of the
coefficients
of the grading-$k$ terms in the Taylor expansion of $f(t_1,t_2,\cdots,t_D)$ near $(t_1,t_2,\cdots,t_D)=(0,0,\cdots,0)$.
Lemma \ref{lemmacombinatoricsofpartition} applies to this case.

The existence of a non-trivial solutions
tells that $\mathrm{rank}\, A \leq D-1$.
In fact, more is known and we can show that
$\mathrm{rank}\, A =D-1$ for generic $D$ and $N-M$.
For simplicity we consider the case where each $V_{k}$ in $V=\bigotimes_{k=1}^{N} V_{k}$ is the irreducible representation of $\mathrm{SU}(2)$ with dimension $D=2j+1$.
The general case where $V_{k}$ is not irreducible is similar.

Now according to \eqref{eqnconstantweightcombinatorics}, we see that if $[x]$
is a certain partition in $\mathcal{X}/\mathfrak{S}_{N-M}$, then the following is also a partition if $N-M\geq 2$
\begin{equation}
[x]+  \alpha_{a}\cdot 1+ \alpha_{b}\cdot 1\,,
\end{equation}
where $\alpha_{a}, \alpha_{b}$ are subject to the condition that $\alpha_{a}+ \alpha_{b}=0$.
Furthermore, one also has the combinations
\begin{equation}
[x]+   \alpha_{a}  \cdot 2+\alpha_{b-1} \cdot 1 + \alpha_{b+1} \cdot 1 \,,\quad
[x]+ \alpha_{a-1} \cdot 1 +\alpha_{a+1}\cdot 1 +  \alpha_{b}\cdot 2 \,.
\end{equation}
The former gives a genuine partition provided the relation $1\leq b-1, b+1\leq D$ can be satisfied, which is the case
when $D\geq 3, N-M\geq 3$.
The latter is similar.
We shall refer to the case $D\geq 3, N-M\geq 3$ the generic case, the others are isolated cases which can be dealt with easily.

We then take the frequency vectors corresponding to the partitions in the set
\begin{equation}
[x]\,,
\quad [x]+ \alpha_{a}\cdot 1 + \alpha_{b} \cdot 1 \,,
\quad
[x]+ \alpha_{a}\cdot 2 + \alpha_{b-1} \cdot 1  + \alpha_{b+1} \cdot1   \,, \quad b\leq D-1
\,,
\quad
[x]+\alpha_{1} \cdot 1  +\alpha_{3} \cdot 1  +\alpha_{D-1}\cdot 2  \,.
\end{equation}
By computing the determinants inductively, it is easy to see that the corresponding frequency vectors
span a vector space of dimension at least $D-1$.
Combing the relation $\mathrm{rank}\, A \leq D-1$ implied by Lemma \ref{lemmacombinatoricsofpartition}, we are then led to the conclusion that
$\mathrm{rank}\, A = D-1$ for generic $D, N-M$.
\end{example}

\begin{example}
As another example, we put the constraint
\begin{equation}
\sum_{r=1}^{D}\alpha_{r}^2 n_{r}=S\,,
\end{equation}
where $\alpha_{r},r=1,2,\cdots, D$, are the weights of the basis $\mathcal{B}=\{e_{1},e_{2},\cdots, e_{D}\}$.
The solution $\{b_{r}\}_{r=1}^{D}$ to \eqref{eqngeneralizingconstraint}
then exists if we choose $I_{0}, S$ suitably.
\end{example}

\section*{Acknowledgments}

We thank Markus Grassl for helpful discussions. J.~C. was supported by the Department of Defence. M. H. acknowledges support from the US National Science Foundation through grant PHY-1602867, and Start-up Grant at Florida Atlantic University, USA. Y.~L. acknowledges support from Chinese Ministry of Education under grants No.20173080024. B.~Z. is supported by NSERC and CIFAR. J.~Z. was supported in part by Perimeter Institute for Theoretical Physics. Research at Perimeter Institute is supported by the Government of Canada through Innovation, Science and Economic Development Canada and by the Province of Ontario through the Ministry of Research, Innovation and Science.

\end{document}